\algnewcommand{\algorithmicand}{\textbf{ and }}
\algnewcommand{\algorithmicor}{\textbf{ or }}
\algnewcommand{\AlgAnd}{\algorithmicand}
\algnewcommand{\AlgOr}{\algorithmicor}
\Crefname{figure}{Fig.}{Figs.}
\newcolumntype{C}{>{\centering\arraybackslash}X} 
\let\oldforeign@language\foreign@language
\DeclareRobustCommand{\foreign@language}[1]{%
	\lowercase{\oldforeign@language{#1}}}
\providecommand{\algorithmname}{Algorithm}
\let\oldforeign@language\foreign@language
\DeclareRobustCommand{\foreign@language}[1]{%
	\lowercase{\oldforeign@language{#1}}}
\newtheorem{lem}{Lemma}
\newtheorem{thm}{Theorem}
\newtheorem{rem}{Remark}
\newtheorem{assum}{Assumption}
\def\ps@IEEEtitlepagestyle{%
	\def\@oddhead{\parbox[t][\height][t]{\textwidth}{\centering \scriptsize
			Personal use of this material is permitted. Permission from the author(s) and/or copyright holder(s), must be obtained for all other uses. Please contact us and provide details if you believe this document breaches copyrights.\\
			\noindent\makebox[\linewidth]{}
		}\hfil\hbox{}}%
	\def\@evenhead{\scriptsize\thepage \hfil \leftmark\mbox{}}%
	\def\@oddfoot{\parbox[t][\height][l]{\textwidth}{
			\vspace{-20pt}{\rule{\textwidth}{0.4pt}}\\ \footnotesize			{\bf{\footnotesize\textcolor{red}{A. M. Ali, H. A. Hashim, and C. Shen, "MPC Based Linear Equivalence with Control Barrier Functions for VTOL-UAVs," The 2024 IEEE American Control Conference (ACC), Toronto, Canada, 2024.}}}\\
			\noindent\makebox[\linewidth]
		}\hfil\hbox{}}%
	\def\@evenfoot{\MYfooter}}
\begin{document}
	\bstctlcite{IEEEexample:BSTcontrol}

\title{MPC Based Linear Equivalence with Control Barrier Functions for VTOL-UAVs}

\author{Ali Mohamed Ali, Hashim A. Hashim, and Chao Shen
	\thanks{This work was supported in part by the National Sciences and Engineering Research Council of Canada (NSERC), under the grants RGPIN-2022-04937, RGPIN-2022-04940, DGECR-2022-00103 and DGECR-2022-00106.}
	\thanks{A. M. Ali and H. A. Hashim are with the Department of Mechanical
		and Aerospace Engineering, Carleton University, Ottawa, ON, K1S-5B6,
		Canada (e-mail: AliMohamedAli@cmail.carleton.ca and hhashim@carleton.ca). C. Shen is with the Department of Systems and Computer Engineering, Carleton
		University, Ottawa, ON, K1S-5B6, Canada (shenchao@sce.carleton.ca).}
}



\maketitle
\begin{abstract}
	In this work, we propose a cascaded scheme of linear Model prediction Control (MPC) based on Control Barrier Functions (CBF) with Dynamic Feedback Linearization (DFL) for Vertical Take-off and Landing (VTOL) Unmanned Aerial Vehicles (UAVs). CBF is a tool that allows enforcement of forward invariance of a set using Lyapunov-like functions to ensure safety. The First control synthesis that employed CBF was based on Quadratic Program (QP) that modifies the existing controller to satisfy the safety requirements. However, the CBF-QP-based controllers leading to longer detours and undesirable transient performance. Recent contributions utilize the framework of MPC benefiting from the prediction capabilities and constraints imposed on the state and control inputs. Due to the intrinsic nonlinearities of the dynamics of robotics systems, all the existing MPC-CBF solutions rely on nonlinear MPC formulations or operate on less accurate linear models. In contrast, our novel solution unlocks the benefits of linear MPC-CBF while considering the full underactuated dynamics without any linear approximations. The cascaded scheme converts the problem of safe VTOL-UAV navigation to a Quadratic Constraint Quadratic Programming (QCQP) problem solved efficiently by off-the-shelf solvers. The closed-loop stability and recursive feasibility is proved along with numerical simulations showing the effective and robust solutions.
\end{abstract}

\begin{IEEEkeywords}
	Unmanned Aerial Vehicles, Vertical Take-off and Landing, Model Predictive Control, MPC, Nonlinearity, Dynamic Feedback Linearization, Optimal Control.
\end{IEEEkeywords}

\section{Introduction}\label{sec1}

The  urban applications of Vertical Take-off and Landing (VTOL) Unmanned
Aerial Vehicles (UAVs) highlight the significance of obstacle avoidance and hence classify VTOL-UAV navigation as a safety-critical system (visit Fig.\ref{fig: task}) \cite{hashim2023exponentially,hashim2023observer}.
The term safety-critical is used by researchers to distinguish systems
for which safety is a major design consideration \cite{ames2019control}.
The study of safety in dynamical systems started in \cite{nagumo1942lage}, where the necessary and sufficient conditions of the invariant set were studied
\cite{blanchini1999set}. Recently, the Control Barrier Function (CBF)
was introduced as a Lyapunov-like function that represents a safety
measure of the system \cite{ames2019control,ames2016control}. CBF-based safety constraints
were then utilized to construct a Quadratic Program (QP) minimizing
the difference between the feedback controller and the applied controller
under the CBF inequality constraint. The CBF-QP approach has been adopted to solve many control problems such as lane keeping  \cite{ames2019control},
safe navigation of a quadcopter \cite{singletary2021comparative},
haptic teleoperation \cite{zhang2020haptic}, and others. 
\begin{figure}[!htb]
	\centering\includegraphics[width=10cm,height=8cm,keepaspectratio]{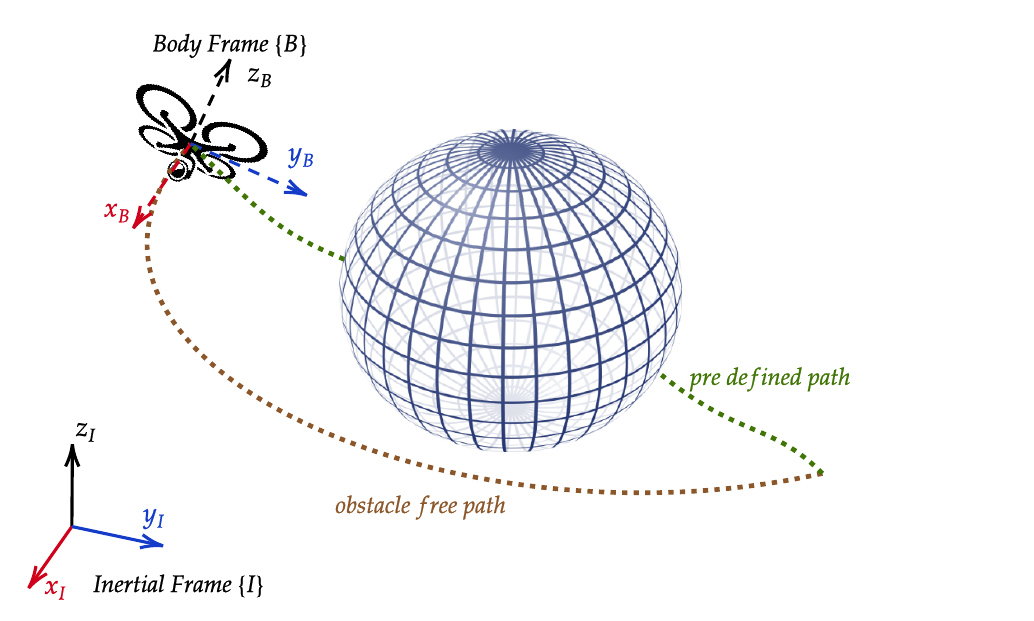}
	
	\caption{\label{fig: task} VTOL-UAV safe navigation task where $\{I\}$ is
		the body fixed frame and $\{B\}$ is the Global frame}
\end{figure}

The advantage of MPC, also known as receding horizon control, consists
in the strong ability to handle input and states constraints.
Thus, it is widely used for many robotic systems. In \cite{yoon2009model,frasch2013auto}
safety was considered in the MPC framework as an extra distance constraint
with various Euclidean norms in the optimization problem. However,
the obstacle avoidance constrains used by these approaches get activated
and change the robot's trajectory only if the robot is in close proximity
to an obstacle. As such, avoiding obstacles located further away requires
larger prediction horizon leading to higher computational time \cite{zeng2021safety}.
Nonetheless, the majority of existing literature uses a simplified
model of the dynamics, for instance \cite{turri2013linear} solves the lane keep problem using a linear passenger
car model to formulate the MPC optimization. Recently promising efforts have been made to combine MPC and CBFs. In \cite{son2019safety},
the authors proposed achieving vehicle safety by incorporating continuous-time
CBFs into discrete-time MPC. In \cite{rosolia2020multi}, the authors
proposed a control design that leverages CBFs for low-level control
and nonlinear MPC policies for high-level planning. This method treats
MPC and CBF separately utilizing them at different levels of the solution
to control the nonlinear model of Segway and for legged robots in
\cite{grandia2021multi}. In \cite{zeng2021safety},
a discrete formulation of the CBF was unified with nonlinear MPC and
applied to the car racing problem. In \cite{liu2023iterative}, the MPC based CBF was converted to an iterative convex optimization by linearizing   the  nonlinear dynamics as well as the CBF.
Collision avoidance for VTOL-UAVs has been investigated in \cite{thanh2018completion}
where a combination of geometrical constraints and kinematics was
used.
In \cite{chen2013uav} the authors proposed a similar framework
of the artificial potential fields, where Tangent Vector Field Guidance
(TVFG) algorithms employed to generate obstacle-free path for VTOL-UAVs.
Vision-based detection and collision avoidance have been widely studied
\cite{odelga2016obstacle}. 
The work in \cite{odelga2016obstacle}
proposed an MPC-based control with vision-based detection of obstacles.
However, the shortcoming of vision-based MPC approaches for collision
avoidance consists in VTOL-UAV starting to deviate from an obstacle
only when it is in its close proximity requiring larger
prediction horizon.

\paragraph*{Contributions} The proposed solution
bridges the above-identified literature gaps by formulating MPC using CBF to control the VTOL-UAV such that the safety constraint gets activated
everywhere providing smooth trajectory with the constraints are likely to be satisfied compared to usage of Euclidean norms as a constraint. The CBF provides
the notion of the global forward invariance of the safe set. The VTOL-UAV avoids the obstacle even if it is far from it leading to a shorter prediction  horizon, unlike the Euclidean distance case where the constraint is activated once it near the obstacle. Our proposed MPC-CBF with DFL
considers the full nonlinear underactuated UAV model. To address the system
nonlinearities and the computational complexity arising from the nonlinear
MPC, we introduce a solution that integrates a cascaded scheme of
Dynamic Feedback Linearization (DFL) with MPC unlocking the benefits
of linear MPC as opposed to nonlinear MPC in terms of computational burden and stability guarantees. The contribution of this
work can be summarized as follows:
\begin{enumerate}
	\item A cascaded scheme of DFL and
	MPC is proposed to unlock the benefits associated with Linear MPC.
	\item The mapping between the original nonlinear underactuated
	model and the linear equivalent model (MPC-CBF formulation
	designed on linear equivalent model by DFL) is presented.
	Combining MPC-CBF and DFL into a single scheme allows us to convert
	the safe navigation of the VTOL-UAV full model into a Quadratic Constraint
	Quadratic Programming (QCQP) Problem. 
	\item Closed Loop stability and recursive feasibility of the proposed scheme is proved and
	numerical simulations for a standard safe navigation task of a VTOL-UAV
	are carried out demonstrating the effectiveness and robustness of the proposed scheme.
\end{enumerate}
The remainder of the paper
is organized as follows: Section \ref{sec:Preliminaries} presents
preliminaries and mathematical notation. Section \ref{sec:model} introduces the VTOL-UAV dynamic model. The linear equivalence model is discussed in Section
\ref{sec:Sec2_Linear-Equivalence}. Section \ref{sec:Sec5_Proposed-Scheme}
presents the proposed control scheme. Section \ref{sec:Numerical-Results}
illustrates the effectiveness of the proposed scheme through numerical
simulations. Finally, Section \ref{sec:Sec7_Conclusion} concludes
the work.

\section{Preliminaries \label{sec:Preliminaries}}

In this paper, $\mathbb{R}$ denotes the set of real numbers 
and $\mathcal{L}$ defines the Lie derivative
operator. For a given vector field $f(x)$ such that $f:\mathbb{R}^{n}\rightarrow\mathbb{R}^{n}$
and a scalar function $\lambda:\mathbb{R}^{n}\rightarrow\mathbb{R}$,
the Lie derivative of $\lambda$ with respect to $f$ can be written
as $\mathcal{L}_{f}\lambda=\frac{\partial\lambda}{\partial x}\cdotp f(x).$
Consider the following single input single output nonlinear affine
in control system:
\begin{equation}
	\begin{cases}
		\dot{x} & =f(x)+g(x)u\\
		y & =h(x)
	\end{cases}\label{eq:pr1}
\end{equation}
where $x\in\mathbb{R}^{n}$ describes the system states, $u\in\mathbb{R}$
defines the system control input, $y\in\mathbb{R}$ denotes the system
output, $f:\mathbb{R}^{n}\rightarrow\mathbb{R}^{n}$, $g:\mathbb{R}^{n}\rightarrow\mathbb{R}^{n}$,
and $h:\mathbb{R}^{n}\rightarrow\mathbb{R}$. The relative degree
$r$ of such system can be defined at point $x_{0}$ if $\mathcal{L}_{g}\mathcal{L}_{f}^{\rho}h(x)=0$
for all $x$ in the neighborhood of $x_{0}$, $\rho<r-1$, and $\mathcal{L}_{g}\mathcal{L}_{f}^{r-1}h(x_{0})\neq0.$
The multi-input multi-output (MIMO) square affine in control system
is expressed as follows:

\begin{equation}
	\begin{cases}
		\dot{x} & =f(x)+g_{1}(x)u_{1}+\cdots+g_{m}(x)u_{m}\\
		y_{1} & =h_{1}(x)\\
		\vdots & =\vdots\\
		y_{m} & =h_{m}(x)
	\end{cases}\label{eq:pr2}
\end{equation}
where $x\in\mathbb{R}^{n}$ describes the system states, $u\in\mathbb{R}^{m}$
defines the system control input, $y\in\mathbb{R}^{m}$ denotes the system
output, $f:\mathbb{R}^{n}\rightarrow\mathbb{R}^{n}$, $g:\mathbb{R}^{n}\rightarrow\mathbb{R}^{m}$,
and $h:\mathbb{R}^{n}\rightarrow\mathbb{R}^{m}$.
\begin{lem}
	\label{lem:degree}\cite{isidori1985nonlinear} The 
	relative degree
	of \eqref{eq:pr2} at $x_{0}$ is described as $r=[r_{1},\ldots,r_{m}]^{\top}\in\mathbb{R}^{m}$
	exists if:
	\begin{itemize}
		\item $\mathcal{L}_{g_{j}}\mathcal{L}_{f}^{\rho}h_{i}(x)=0$ at the neighborhood
		of $x_{0}$ for all $1\leq j\leq m$, $\rho<r_{i}-1$, and $1\leq i\leq m$.
		\item The decoupling matrix $A(x)\in\mathbb{R}^{m \times m}$ defined as{\small
			\begin{equation}
				A(x)=\left(\begin{array}{ccc}
					\mathcal{L}_{g_{1}}\mathcal{L}_{f}^{r_{1}-1}h_{1}(x) & \cdots & \mathcal{L}_{g_{m}}\mathcal{L}_{f}^{r_{1}-1}h_{1}(x)\\
					\mathcal{L}_{g_{1}}\mathcal{L}_{f}^{r_{2}-1}h_{2}(x) & \cdots & \mathcal{L}_{g_{m}}\mathcal{L}_{f}^{r_{1}-1}h_{2}(x)\\
					\cdots & \cdots & \cdots\\
					\mathcal{L}_{g_{1}}\mathcal{L}_{f}^{r_{m}-1}h_{m}(x) & \cdots & \mathcal{L}_{g_{m}}\mathcal{L}_{f}^{r_{m}-1}h_{m}(x)
				\end{array}\right)\label{eq:decouplingmatrix-1}
			\end{equation}
		}is nonsingular at $x=x_{0}$.
	\end{itemize}
\end{lem}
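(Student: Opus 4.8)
The plan is to verify the characterization directly, by differentiating each output $y_i = h_i(x)$ repeatedly along the trajectories of \eqref{eq:pr2} and recording the order at which the control inputs first enter explicitly. The single computational tool needed is the differentiation identity for a scalar function $\phi$ along the controlled flow,
\begin{equation}
	\dot{\phi} = \frac{\partial \phi}{\partial x}\left(f(x) + \sum_{j=1}^{m} g_j(x)\, u_j\right) = \mathcal{L}_f \phi + \sum_{j=1}^{m} (\mathcal{L}_{g_j}\phi)\, u_j, \label{eq:plan-diff}
\end{equation}
which follows immediately from the chain rule and the definition of the Lie derivative given in Section \ref{sec:Preliminaries}. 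Applying \eqref{eq:plan-diff} with $\phi = \mathcal{L}_f^{k} h_i$ produces the recursion that drives the whole argument.

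Next I would run an induction on $k$ to show that, under the first hypothesis, the $k$-th time derivative of $y_i$ equals $\mathcal{L}_f^{k} h_i$ with no input term present for every $k \le r_i - 1$. The base case $k=0$ is simply $y_i = h_i$. For the inductive step, substituting $\phi = \mathcal{L}_f^{k} h_i$ into \eqref{eq:plan-diff} gives
\begin{equation}
	y_i^{(k+1)} = \mathcal{L}_f^{k+1} h_i + \sum_{j=1}^{m} \left(\mathcal{L}_{g_j}\mathcal{L}_f^{k} h_i\right) u_j,
\end{equation}
and the hypothesis $\mathcal{L}_{g_j}\mathcal{L}_f^{\rho} h_i(x) = 0$ for $\rho < r_i - 1$ forces the input sum to vanish identically on a neighborhood of $x_0$ as long as $k \le r_i - 2$. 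Carrying the induction one step further, to $k = r_i - 1$, yields
\begin{equation}
	y_i^{(r_i)} = \mathcal{L}_f^{r_i} h_i + \sum_{j=1}^{m} \left(\mathcal{L}_{g_j}\mathcal{L}_f^{r_i - 1} h_i\right) u_j,
\end{equation}
which is the first derivative of $y_i$ in which the controls are permitted to appear.

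Finally I would stack these $m$ identities for $i = 1, \ldots, m$ to obtain the compact input--output relation $[\, y_1^{(r_1)}, \ldots, y_m^{(r_m)} \,]^{\top} = b(x) + A(x)\,[\, u_1, \ldots, u_m \,]^{\top}$, where $b(x)$ collects the drift terms $\mathcal{L}_f^{r_i} h_i$ and $A(x)$ is exactly the decoupling matrix \eqref{eq:decouplingmatrix-1}. The vector $r$ then qualifies as a well-defined relative degree precisely when this map from inputs to highest-order output derivatives is pointwise invertible, i.e. when $A(x_0)$ is nonsingular; this is the second hypothesis, and it is what guarantees that the controls can be uniquely recovered from the output derivatives and hence that the decoupling feedback $u = A(x)^{-1}\big(v - b(x)\big)$ used later is well posed. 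I expect the main subtlety to lie not in any single calculation but in the bookkeeping of the induction: the first hypothesis must hold on a \emph{full neighborhood} of $x_0$ rather than merely at $x_0$, so that each intermediate derivative $y_i^{(k)}$ is genuinely input-free \emph{as a function of} $x$, which is what legitimizes differentiating it again through \eqref{eq:plan-diff}; by contrast, the nonsingularity of $A$ is required only at the single point $x_0$.
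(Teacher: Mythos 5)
Your proposal is correct, and it is essentially the paper's own approach: the paper offers no proof of Lemma~\ref{lem:degree}, citing it verbatim from \cite{isidori1985nonlinear}, where the result is established by precisely the repeated-differentiation argument you reconstruct (inductively showing $y_i^{(k)}=\mathcal{L}_f^{k}h_i$ for $k\leq r_i-1$, then stacking the $r_i$-th derivatives to expose the decoupling matrix \eqref{eq:decouplingmatrix-1}). Your closing remark correctly identifies the one genuine subtlety --- the vanishing conditions must hold on a full neighborhood of $x_0$ so that each intermediate derivative is input-free as a function of $x$, whereas nonsingularity of $A$ is needed only at the point $x_0$ itself --- so there is nothing to fix.
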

\begin{lem}
	\label{lem:Full}\cite{isidori1985nonlinear} The input-to-state feedback
	linearization of the system dynamics in \eqref{eq:pr2} is solvable
	at $x_{0}$ using the control input $u=A^{-1}(x)\left(\left[\begin{array}{c}
		\mathcal{-L}_{f}^{r_{1}}h_{1}(x)\\
		\vdots\\
		-\mathcal{L}_{f}^{r_{m}}h_{m}(x)
	\end{array}\right]+\left[\begin{array}{c}
		v_{1}\\
		\vdots\\
		v_{m}
	\end{array}\right]\right)$, if $\sum_{r=1}^{m}r=n$ and the decoupling input matrix $A(x)$
	in \eqref{eq:decouplingmatrix-1} is of full rank. $v$ is an external
	reference input to be defined.
\end{lem}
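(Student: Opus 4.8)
The plan is to build the feedback-linearizing coordinate change explicitly by successively differentiating the outputs, and then to check that the proposed control law renders the closed loop linear and full-order. First I would differentiate each output $y_i=h_i(x)$ along the trajectories of \eqref{eq:pr2}. By the first condition of Lemma~\ref{lem:degree}, every input channel is annihilated at each differentiation up to order $r_i-1$, so $y_i^{(k)}=\mathcal{L}_f^{k}h_i(x)$ for $k<r_i$, and the inputs surface for the first time at order $r_i$:
\[
  y_i^{(r_i)}=\mathcal{L}_f^{r_i}h_i(x)+\sum_{j=1}^{m}\mathcal{L}_{g_j}\mathcal{L}_f^{r_i-1}h_i(x)\,u_j.
\]
Stacking these $m$ identities gives the compact relation $[\,y_1^{(r_1)},\dots,y_m^{(r_m)}\,]^{\top}=b(x)+A(x)u$, where $b(x)=[\,\mathcal{L}_f^{r_1}h_1(x),\dots,\mathcal{L}_f^{r_m}h_m(x)\,]^{\top}$ and $A(x)$ is precisely the decoupling matrix in \eqref{eq:decouplingmatrix-1}.

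The next step is to invert this relation. Since $A(x_0)$ is nonsingular by hypothesis and its entries are smooth functions of $x$, $A(x)$ remains invertible on a neighborhood of $x_0$; substituting the stated control law $u=A^{-1}(x)\bigl(-b(x)+v\bigr)$ cancels the nonlinear terms and leaves the decoupled relations $y_i^{(r_i)}=v_i$. Introducing the $n$ functions $\xi^{i}_{k}=\mathcal{L}_f^{k-1}h_i(x)$ for $k=1,\dots,r_i$ and $i=1,\dots,m$, the closed loop is in these coordinates a family of $m$ independent integrator chains, i.e.\ a linear controllable system $\dot{\xi}=A_c\xi+B_c v$ in Brunovsky form, with $v$ the external reference.

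It remains to confirm that $\xi=\Phi(x)$ is a genuine change of \emph{state} coordinates rather than a partial one, and this is where the hypothesis $\sum_{i=1}^{m}r_i=n$ enters; I expect this to be the main obstacle. One must show that the $n$ differentials $\{\,d\mathcal{L}_f^{k-1}h_i : 1\le k\le r_i,\ 1\le i\le m\,\}$ are linearly independent at $x_0$. This is the standard independence argument underpinning the MIMO normal form: the vanishing of the lower-order mixed Lie derivatives $\mathcal{L}_{g_j}\mathcal{L}_f^{\rho}h_i=0$ for $\rho<r_i-1$ endows these covectors with a staircase structure when paired against the input vector fields $g_j$, while the nonsingularity of $A(x_0)$ excludes degeneracy at the top level; together these preclude any nontrivial vanishing linear combination. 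Once independence is established, the Jacobian of $\Phi$ is nonsingular, so by the inverse function theorem $\Phi$ is a local diffeomorphism, and because the $n$ new coordinates exhaust the state space there is no left-over internal (zero) dynamics. Hence \eqref{eq:pr2} is input-to-state feedback linearizable at $x_0$ and the quoted control law achieves it.
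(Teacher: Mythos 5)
The paper itself gives no proof of this lemma---it is quoted verbatim from the cited source \cite{isidori1985nonlinear}---so your proposal can only be measured against the standard textbook argument, and in architecture it matches that argument exactly: differentiate each output, use the relative-degree conditions of Lemma~\ref{lem:degree} to get $y_i^{(r_i)}=\mathcal{L}_f^{r_i}h_i(x)+\sum_j \mathcal{L}_{g_j}\mathcal{L}_f^{r_i-1}h_i(x)u_j$, stack into $b(x)+A(x)u$ with $A(x)$ the decoupling matrix \eqref{eq:decouplingmatrix-1}, invert $A$ on a neighborhood of $x_0$ by continuity, and invoke $\sum_i r_i=n$ to promote $\xi^i_k=\mathcal{L}_f^{k-1}h_i(x)$ to a full local coordinate change with no residual zero dynamics. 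All of that is correct.

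The gap sits precisely in the step you flagged as the main obstacle: the linear independence of the $n$ differentials $d\mathcal{L}_f^{k-1}h_i(x_0)$. Pairing a putative vanishing combination $\sum_{i,k}c_{ik}\,d\mathcal{L}_f^{k-1}h_i(x_0)=0$ against the input fields $g_j$ \emph{alone}, as you propose, gives $\sum_{i,k}c_{ik}\,\mathcal{L}_{g_j}\mathcal{L}_f^{k-1}h_i(x_0)=0$; by the relative-degree conditions every term with $k\le r_i-1$ vanishes identically, so these $m$ pairings only force $c_{i,r_i}=0$ via the nonsingularity of $A(x_0)$. For the surviving lower-order combination, every pairing with $g_j$ is identically zero and yields no further information, so the argument stalls below the top level. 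The standard repair (Isidori, Ch.~5, the MIMO normal-form lemma) pairs instead against the iterated Lie brackets $\mathrm{ad}_f^{l}g_j=[f,[f,\ldots,[f,g_j]]]$: since $\mathcal{L}_{g_j}\mathcal{L}_f^{s}h_i\equiv 0$ for $s<r_i-1$ holds on a whole neighborhood of $x_0$ (not just at the point), one obtains $\mathcal{L}_{\mathrm{ad}_f^{l}g_j}\mathcal{L}_f^{k}h_i(x_0)=0$ for $k+l<r_i-1$ and $\mathcal{L}_{\mathrm{ad}_f^{l}g_j}\mathcal{L}_f^{k}h_i(x_0)=(-1)^{l}\mathcal{L}_{g_j}\mathcal{L}_f^{k+l}h_i(x_0)$ for $k+l=r_i-1$, which produces the anti-triangular pairing matrix with copies of $\pm A(x_0)$ on the anti-diagonal and annihilates the coefficients level by level. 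Your ``staircase structure'' intuition is exactly right, but the vector fields that realize the staircase are the brackets $\mathrm{ad}_f^{l}g_j$, not the $g_j$ themselves; as literally written, your independence step fails for all coefficients below the top one.
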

Throughout this paper $\mathcal{H}(x)$ represents a control barrier function
with $\mathcal{H}(x):D\subset\mathbb{R}^{n}\rightarrow\mathbb{R}_{\geq0}$   
describing a safety metric. The class of $\mathcal{K_{\infty}}$ extended
function is denoted by $\kappa$ such that it is a strictly increasing function
with the mapping $[0,\infty)\rightarrow[0,\infty)$ and $lim_{r\rightarrow\infty}\kappa(r)=\infty$.
Define $C=\{x\in D\subset\mathbb{R}^{n}:\mathcal{H}(x)\geq0\}$ to
be the safe set. 

\section{VTOL-UAV Dynamic Model \label{sec:model}}

The Euler angles $\zeta=[\phi,\theta,\psi]^{\top}\in\mathbb{R}^{3}$ mapping
to rotat-ional matrix $R\in SO(3)$ follows $R_{\zeta}:\mathbb{R}^{3}\rightarrow SO(3)$
\cite{hashim2019special}:
\begin{equation}
	R_{\zeta}=\left[\begin{array}{ccc}
		c_{\psi}c_{\theta} & c_{\psi}s_{\phi}s_{\theta}-c_{\phi}s_{\psi} & s_{\phi}s_{\psi}+c_{\phi}c_{\psi}s_{\theta}\\
		c_{\theta}s_{\psi} & c_{\phi}c_{\psi}+s_{\phi}s_{\psi}s_{\theta} & c_{\phi}s_{\psi}s_{\theta}-c_{\psi}s_{\phi}\\
		-s_{\theta} & c_{\theta}s_{\phi} & c_{\phi}c_{\theta}
	\end{array}\right]\label{eq:1-2}
\end{equation}
with $c=\cos$, $s=\sin$, and $t=\tan$. The dynamics of a rigid-body
under external forces applied to the center of mass and expressed
in Newton-Euler formalism are defined as follows:
\begin{align}
	\sum F & =m\dot{V}\hspace{2.5cm}F,V\in\{\mathcal{I}\}\label{eq:18}\\
	\sum M & =J\dot{\Omega}+\Omega\times J\Omega\hspace{1cm}M,\Omega\in\{\mathcal{\mathcal{B}}\}\label{eq:19}
\end{align}
where $V=[v_{x},v_{y},v_{z}]^{\top}$ is the linear velocity vector,
$\Omega=[\Omega_{x},\Omega_{y},\Omega_{z}]^{\top}$ is rotational
velocity, $m$ is the total mass of the UAV, $J$ is the UAV inertia
matrix, $\{\mathcal{I}\}$ denotes inertial-frame, and $\{\mathcal{B}\}$
refers to body-frame. $\sum F$ and $\sum M$ are the
sum of external forces and moments acting on the UAV.
The full nonlinear underactuated model (neglecting the aerodynamic forces) can
be written in the form of MIMO affine in control system as in \eqref{eq:pr2}
with $x=[p_{x},p_{y},p_{z},\phi,\theta,\psi,v_{x},v_{y},v_{z},\dot{\phi},\dot{\theta},\dot{\psi}]^{\top}\in\mathbb{R}^{12}$
being a state vector and the functions $f(x)$, $g_{1}(x)$, $g_{2}(x)$,
$g_{3}(x)$, and $g_{4}(x)$ being expressed as follows \cite{bouabdallah2007design}:{\small
	\begin{alignat}{1}
		f(x) & =\left[\begin{array}{c}
			V\\
			\dot{\theta}s_{\phi}\sec_{\theta}+\dot{\psi}c_{\phi}\sec_{\theta}\\
			\dot{\theta}c_{\phi}-\dot{\psi}s_{\phi}\\
			\dot{\phi}+\dot{\theta}s_{\phi}t_{\theta}+\dot{\psi}c_{\phi}t_{\theta}\\
			0\\
			0\\
			-g\\
			\frac{I_{y}-I_{z}}{I_{x}}\dot{\theta}\dot{\psi}\\
			\frac{I_{z}-I_{x}}{I_{y}}\dot{\phi}\dot{\psi}\\
			\frac{I_{x}-I_{y}}{I_{z}}\dot{\phi}\dot{\theta}
		\end{array}\right],g_{1}(x)=\left[\begin{array}{c}
			0_{1\times6}\\
			\frac{-(c_{\phi}c_{\phi}s_{\theta}+s_{\phi}s_{\phi})}{m}\\
			\frac{-(c_{\phi}c_{\theta}s_{\psi}+s_{\psi}s_{\phi})}{m}\\
			\frac{c_{\theta}c_{\phi}}{m}\\
			0_{1\times3}
		\end{array}\right]\nonumber \\
		g_{2}(x) & =\left[\begin{array}{c}
			0_{1\times9}\\
			\frac{d}{I_{x}}\\
			0_{1\times2}
		\end{array}\right],g_{3}(x)=\left[\begin{array}{c}
			0_{1\times10}\\
			\frac{d}{I_{y}}\\
			0
		\end{array}\right],g_{4}(x)=\left[\begin{array}{c}
			0_{1\times11}\\
			\frac{d}{I_{z}}
		\end{array}\right]\label{eq:201}
	\end{alignat}
}where $V=[v_{x},v_{y},v_{z}]^{\top}$,  $d$ is the distance from the center of mass to the rotors,
and $g$ is the gravity constant. $u_{1}$ associated with $g_{1}(x)$
is the total applied thrust and $[u_{2},u_{3},u_{4}]$ associated
with $[g_{1}(x),g_{2}(x),g_{3}(x)]$ refer to UAV rotational torque
inputs. The output is expressed as follows:
\begin{equation}
	y_{1}=x,\hspace{0.25cm}y_{2}=y,\hspace{0.25cm}y_{3}=z,\hspace{0.25cm}y_{4}=\psi.\label{eq:25}
\end{equation}

\section{Linear Equivalence Model \label{sec:Sec2_Linear-Equivalence}}
In this section, the linear equivalence model of the VTOL-UAV
model in \eqref{eq:201} will be discussed. In view of \eqref{lem:degree}
and \eqref{lem:Full}, the VTOL-UAV nonlinear underactuated dynamics can be converted into a linear system of a chain
of integrators if the input-to-state full feedback Linearization is
solvable.
\subsection{Feedback Linearization}
Recalling \eqref{lem:degree}, the vector relative degree can be defined
as $r=[r_{1},r_{2},r_{3},r_{4}]^{\top}$. For $\rho=1,$ one shows that
$\mathcal{L}_{g_{j}}h_{i}(x)=0$, $\forall1\leq j\leq4$ and $1\leq i\leq4$,
and hence one concludes that $r_{1}=r_{2}=r_{3}=r_{4}\neq1$. For
$\rho=2$ and after some tedious calculations to compute $A(x)$ one
obtains:
{\small
	\begin{alignat}{1}
		A(x) & =\left[\begin{array}{cccc}
			\mathcal{L}_{g_{1}}\mathcal{L}_{f}h_{1}(x) & \mathcal{L}_{g_{2}}\mathcal{L}_{f}h_{1}(x) & \cdots & \mathcal{L}_{g_{4}}\mathcal{L}_{f}h_{1}(x)\\
			\mathcal{L}_{g_{1}}\mathcal{L}_{f}h_{2}(x) & \mathcal{L}_{g_{2}}\mathcal{L}_{f}h_{2}(x) & \cdots & \mathcal{L}_{g_{4}}\mathcal{L}_{f}h_{2}(x)\\
			\vdots & \vdots & \ddots & \vdots\\
			\mathcal{L}_{g_{1}}\mathcal{L}_{f}h_{4}(x) & \mathcal{L}_{g_{2}}\mathcal{L}_{f}h_{4}(x) & \cdots & \mathcal{L}_{g_{4}}\mathcal{L}_{f}h_{4}(x)
		\end{array}
		\right]\nonumber \\
		& =\left[\begin{array}{cccc}
			\frac{-(c_{\phi}c_{\phi}s_{\theta}+s_{\phi}s_{\phi})}{m} & 0 & 0 & 0\\
			\frac{-(c_{\phi}c_{\theta}s_{\psi}+s_{\psi}s_{\phi})}{m} & 0 & 0 & 0\\
			\frac{-c_{\theta}c_{\phi}}{m} & 0 & 0 & 0\\
			0 & 0 & \frac{d}{I_{y}}s_{\phi}sec_{\theta} & \frac{1}{I_{z}}c_{\phi}sec_{\theta}
		\end{array}\right]\label{eq:26}
\end{alignat}} 
It becomes apparent that $A(x)$ is singular and in view of \eqref{lem:Full}
the full input-to-state feedback linearization using the control input
is unsolvable. As such, some modifications are necessary to render
the nonlinear model dynamics in \eqref{eq:201} accounting for the
full input-to-state feedback linearizable form which is the focus
of the next subsection. 

\subsection{Dynamic Feedback Linearization (DFL)}

DFL also known as the dynamic extension algorithm, is comprehensively
discussed in (\cite{isidori1985nonlinear}, Chapter 5). By analyzing
the decoupling matrix $A(x)$ in \eqref{eq:26}, it is obvious that
the applied torque inputs $u_{2}$, $u_{3}$, and $u_{4}$ are the
main challenge since they do not appear in any of the first three
outputs ($x,y,z$) resulting in $A(x)$ singularity. One way to address
this is issue is by adding a differential delay to the total thrust,
$u_{1}$. This way, the integrator would allow $u_{2}$, $u_{3}$,
and $u_{4}$ to appear in the zero-column of $A(x)$ in \eqref{eq:26}.
By setting $u_{1}$ to be the output of two integrators adding two
new states such that $u_{1}=\zeta$, $\dot{\zeta}=\zeta$ and $\dot{\zeta}=\mathcal{U}_{1},$
resulting in a new vector fields $\bar{f}(\bar{x})$, $\bar{g}_{1}(\bar{x})$,
$\bar{g}_{2}(\bar{x})$, $\bar{g}_{3}(\bar{x})$ and $\bar{g}_{4}(\bar{x})$ as well
as new control signals $\mathcal{U}_{1}$ while the three control
torques will remain unchanged $\mathcal{U}_{2}=u_{2}$, $\mathcal{U}_{3}=u_{3}$,
and $\mathcal{U}_{4}=u_{4}$. The extended dynamics of the VTOL-UAV
will have a new state vector $\bar{x}=[p_{x},p_{y},p_{z},\phi,\theta,\psi,v_{x},v_{y},v_{z},\zeta,\dot{\zeta},\dot{\phi},\dot{\theta},\dot{\psi}]^{\top}\in\mathbb{R}^{14}$
and the new system have $\bar{n}=14$. 
		Based on \eqref{lem:Full}, $\sum_{r=1}^{4}\bar{r}=14=\bar{n}$ implies that the full input-to-state feedback linearization is solvable for
		the extended UAV dynamic model 
		using
		the feedback controller:{\small
			\begin{equation}
				u=\text{\ensuremath{\underbrace{\left[\begin{array}{cccc}
								\bar{a}_{11} & \bar{a}_{12} & \bar{a}_{13} & \bar{a}_{14}\\
								\bar{a}_{21} & \bar{a}_{22} & \bar{a}_{23} & \bar{a}_{24}\\
								\bar{a}_{31} & \bar{a}_{32} & \bar{a}_{33} & \bar{a}_{34}\\
								\bar{a}_{41} & \bar{a}_{42} & \bar{a}_{43} & \bar{a}_{44}
							\end{array}\right]}_{\bar{A}^{-1}(x)}}}\left(\left[\begin{array}{c}
					-\mathcal{L}_{\bar{f}}^{4}h_{1}(\bar{x})\\
					-\mathcal{L}_{\bar{f}}^{4}h_{2}(\bar{x})\\
					-\mathcal{L}_{\bar{f}}^{4}h_{3}(\bar{x})\\
					\mathcal{-L}_{f}^{2}h_{4}(\bar{x})
				\end{array}\right]+\left[\begin{array}{c}
					v_{1}\\
					v_{2}\\
					v_{3}\\
					v_{4}
				\end{array}\right]\right)\label{eq:25-1}
			\end{equation}
		}where:{\small
			\begin{flalign*}
				\small
				\bar{a}_{11} & =m(s_{\bar{x}_{4}}s_{\bar{x}_{6}}+c_{\bar{x}_{4}}c_{\bar{x}_{6}}s_{\bar{x}_{5}}) & \bar{a}_{31} & =\frac{I_{y}mc_{\bar{x}_{5}}c_{\bar{x}_{6}}}{dx_{13}}\\
				\bar{a}_{12} & =m(c_{\bar{x}_{4}}s_{\bar{x}_{5}}s_{\bar{x}_{6}}-mc_{\bar{x}_{6}}s_{\bar{x}_{4}}) & \bar{a}_{32} & =\frac{I_{y}mc_{\bar{x}_{5}}s_{\bar{x}_{6}}}{dx_{13}}\\
				\bar{a}_{21} & =\frac{I_{x}m(c_{\bar{x}_{4}}s_{\bar{x}_{6}}-c_{\bar{x}_{6}}s_{\bar{x}_{4}}s_{\bar{x}_{5}})}{d\bar{x}_{13}} & \bar{a}_{33} & =\frac{-I_{y}m\,s_{\bar{x}_{5}}}{d\bar{x}_{13}}\\
				\bar{a}_{22} & =\frac{I_{x}m(c_{\bar{x}_{4}}c_{\bar{x}_{6}}-s_{\bar{x}_{4}}s_{\bar{x}_{5}}s_{\bar{x}_{6}})}{d\bar{x}_{13}} & \bar{a}_{41} & =\frac{-I_{z}m\,c_{\bar{x}_{5}}c_{\bar{x}_{6}}t_{\bar{x}_{4}}}{d\bar{x}_{13}}\\
				\bar{a}_{23} & =\frac{I_{x}m\,c_{\bar{x}_{5}}s_{\bar{x}_{4}}}{d\bar{x}_{13}} & \bar{a}_{42} & =\frac{-I_{z}m\,c_{\bar{x}_{5}}s_{\bar{x}_{4}}s_{\bar{x}_{6}}}{d\bar{x}_{13}c_{\bar{x}_{4}}}\\
				\bar{a}_{43} & =\frac{-I_{z}m\,s_{\bar{x}_{5}}t_{\bar{x}_{4}}}{d\bar{x}_{13}} & \bar{a}_{44} & =\frac{-I_{z}mc_{\bar{x}_{5}}}{d\bar{x}_{4}}\\
				\bar{a}_{13} & =mx_{\bar{x}_{4}}c_{\bar{x}_{5}} & \bar{a}_{14} & =\bar{a}_{24}=\bar{a}_{34}=0
			\end{flalign*}
		}Thus, one can verify that the invertibility of the decoupling matrix is guaranteed by constraining
		the pitch angle $\theta$ such that $-\frac{\pi}{2}<\theta<\frac{\pi}{2}$ and the roll angle such that $-\frac{\pi}{2}<\phi<\frac{\pi}{2}$.
		The exact expressions of $\mathcal{L}_{\bar{f}}^{4}h_{1}(\bar{x})$,
		$\mathcal{L}_{\bar{f}}^{4}h_{2}(\bar{x})$, $\mathcal{L}_{\bar{f}}^{4}h_{3}(\bar{x})$
		and $\mathcal{L}_{\bar{f}}^{2}h_{4}(\bar{x})$ will not be stated
		due to page limitations.
		
		\section{Proposed Scheme \label{sec:Sec5_Proposed-Scheme}}
		
		The key feature of obstacle avoidance control scheme is the safety
		constraints represented by the CBF. In \cite{ames2019control}, the
		authors proposed CBF $\mathcal{H}_{k}$ representing a safety metric
		as the distance between the moving object and the obstacle. The sufficient
		and necessary conditions for the safe maneuvers are based on the usage
		of class $\mathcal{K_{\infty}}$ function similar to Lyapunov functions
		such that $\dot{\mathcal{H}}(x,u)\geq\mathcal{-\kappa}(\mathcal{H}(x))\Longleftrightarrow C$
		is invariant. The proposed scheme makes use of the CBF concepts in
		the MPC formulation cascaded by dynamic feedback controller defined
		in \eqref{eq:25-1}. The motivation for using the
		cascaded scheme will be unlocking the usage of a linear MPC with all
		its merit compared to the nonlinear MPC in terms of computational
		cost and ease of stability guarantees. 
		\begin{lem}
			\label{thm:thm1}(States  mapping of VTOL-UAV in MPC-DFL scheme) Recall the VTOL-UAV  model dynamics in \eqref{eq:201}.
			Using the control input in \eqref{eq:25-1}, the optimization problem
			of the MPC in the cascaded scheme of MPC-DFL can be formulated on
			a linear equivalent model dynamics as $\dot{z}=A_{z}z+B_{z}v$ with
			the following state and input  mapping:
			\begin{eqnarray}
				z_{1}= & x, & z_{8}=\dddot{y}\nonumber \\
				z_{2}= & \dot{x}, & z_{9}=z\nonumber \\
				z_{3}= & \ddot{x}, & z_{10}=\dot{z}\label{eq:26-1}\\
				z_{4}= & \dddot{x}, & z_{11}=\ddot{z}\nonumber \\
				z_{5}= & y, & z_{12}=\dddot{z}\nonumber \\
				z_{6}= & \dot{y}, & z_{13}=\psi\nonumber \\
				z_{7}= & \ddot{y}, & z_{14}=\dot{\psi}\nonumber 
			\end{eqnarray}
			\begin{equation}
			\left[\begin{array}{c}
				v_{1}\\
				v_{2}\\
				v_{3}\\
				v_{4}
				\end{array}\right]=\bar{A}(\bar{x})\left[\begin{array}{c}
				u_{1}\\
				u_{2}\\
				u_{3}\\
				u_{4}
				\end{array}\right]+\left[\begin{array}{c}
				\mathcal{L}_{\bar{f}}^{4}h_{1}(\bar{x})\\
				\mathcal{L}_{\bar{f}}^{4}h_{2}(\bar{x})\\
				\mathcal{L}_{\bar{f}}^{4}h_{3}(\bar{x})\\
				\mathcal{L}_{\bar{f}}^{2}h_{4}(\bar{x})
				\end{array}\right]\label{eq:27}
				\end{equation}
			\end{lem}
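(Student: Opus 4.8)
The plan is to apply the input--state feedback-linearization theorem (Lemma \ref{lem:degree} and Lemma \ref{lem:Full}) to the dynamically extended system, whose state is $\bar{x}\in\mathbb{R}^{14}$. The essential fact is that, after appending the double integrator to the thrust $u_1$, the vector relative degree of the outputs $(x,y,z,\psi)$ becomes $\bar{r}=[4,4,4,2]^{\top}$ with $\sum_i\bar{r}_i=14=\bar{n}$, so the linearization is complete and there are no zero dynamics. First I would define the linearizing change of coordinates $z=\Phi(\bar{x})$ blockwise, taking each output together with its successive Lie derivatives along $\bar{f}$: for the position channels $z_{4k-3}=h_k$, $z_{4k-2}=\mathcal{L}_{\bar{f}}h_k$, $z_{4k-1}=\mathcal{L}_{\bar{f}}^{2}h_k$, $z_{4k}=\mathcal{L}_{\bar{f}}^{3}h_k$ for $k=1,2,3$, and for yaw $z_{13}=h_4=\psi$, $z_{14}=\mathcal{L}_{\bar{f}}h_4=\dot{\psi}$. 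Because $h_1=x$, $h_2=y$, $h_3=z$ and no control surfaces before the top of each chain, these Lie derivatives coincide exactly with the successive time derivatives listed in \eqref{eq:26-1}.

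Next I would differentiate $z$ along the closed-loop flow. By the relative-degree property, none of the inputs appears before the top of each integrator chain, giving $\dot{z}_{4k-3}=z_{4k-2}$, $\dot{z}_{4k-2}=z_{4k-1}$, $\dot{z}_{4k-1}=z_{4k}$ for $k=1,2,3$ and $\dot{z}_{13}=z_{14}$. At the top of each chain the control enters through the decoupling matrix, $\dot{z}_{4k}=\mathcal{L}_{\bar{f}}^{4}h_k(\bar{x})+\big[\bar{A}(\bar{x})u\big]_k$ and $\dot{z}_{14}=\mathcal{L}_{\bar{f}}^{2}h_4(\bar{x})+\big[\bar{A}(\bar{x})u\big]_4$. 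Substituting the feedback law \eqref{eq:25-1}, written compactly as $u=\bar{A}^{-1}(\bar{x})\big(-L(\bar{x})+v\big)$ with $L=[\mathcal{L}_{\bar{f}}^{4}h_1,\mathcal{L}_{\bar{f}}^{4}h_2,\mathcal{L}_{\bar{f}}^{4}h_3,\mathcal{L}_{\bar{f}}^{2}h_4]^{\top}$, cancels the drift terms and leaves $\ddddot{x}=v_1$, $\ddddot{y}=v_2$, $\ddddot{z}=v_3$, $\ddot{\psi}=v_4$. Reading these off yields the Brunovsky form $\dot{z}=A_z z+B_z v$, where $A_z$ is block diagonal with three $4\times4$ shift blocks and one $2\times2$ shift block and $B_z$ carries its nonzero entries in rows $4,8,12,14$. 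The input relation \eqref{eq:27} is then immediate: left-multiplying the feedback law by $\bar{A}(\bar{x})$ and rearranging gives $v=\bar{A}(\bar{x})u+L(\bar{x})$, which is precisely \eqref{eq:27}.

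The main obstacle is to justify that the relative degree really is $[4,4,4,2]$ and, equivalently, that $\Phi$ is a genuine diffeomorphism; this rests on the Lie-derivative computations suppressed in the text for space. I would need to establish two points: (i) that once $u_1$ is passed through the appended double integrator (introducing the two extra states $\zeta,\dot{\zeta}$), the torque inputs $u_2,u_3,u_4$ first appear in the fourth derivatives of $x,y,z$, so the extended decoupling matrix $\bar{A}(\bar{x})$ is fully populated rather than rank-deficient as $A(x)$ was in \eqref{eq:26}; and (ii) that $\bar{A}(\bar{x})$ is nonsingular. The displayed inverse entries $\bar{a}_{ij}$, which remain finite exactly when $c_\phi\neq0$ and $c_\theta\neq0$, confirm (ii) under the restrictions $-\tfrac{\pi}{2}<\phi<\tfrac{\pi}{2}$ and $-\tfrac{\pi}{2}<\theta<\tfrac{\pi}{2}$. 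Nonsingularity of $\bar{A}$ makes $\Phi$ locally invertible, and since $\sum_i\bar{r}_i=\bar{n}$ there is no residual internal state, so $\Phi$ is a diffeomorphism on the admissible angle region; this is what legitimizes posing the MPC problem on the linear surrogate $\dot{z}=A_z z+B_z v$ in place of the original nonlinear model \eqref{eq:201}.
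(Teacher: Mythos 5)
Your proposal is correct and follows essentially the same route as the paper, which merely sketches the argument (``using the state mapping, the derivatives of the $z$ states, and applying the controller in \eqref{eq:25-1}, one finds the linear system'') and omits the details you supply: identifying $z$ with the successive Lie derivatives of the outputs, verifying the vector relative degree $[4,4,4,2]^{\top}$ of the extended system, and cancelling the drift terms via \eqref{eq:25-1} to obtain the Brunovsky form and the input relation \eqref{eq:27}. The only caveat, shared with the paper itself, is that the displayed entries $\bar{a}_{ij}$ also carry $\bar{x}_{13}$ in their denominators (the well-known zero-thrust--type singularity of the dynamic extension), so nonsingularity of $\bar{A}(\bar{x})$ strictly requires more than the angle constraints $\abs{\phi},\abs{\theta}<\tfrac{\pi}{2}$ alone; your statement that the entries remain finite ``exactly when'' $c_{\phi}\neq0$ and $c_{\theta}\neq0$ inherits that imprecision but does not otherwise affect the argument.
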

			\begin{proof}
				The detailed proof is omitted due to the limited page numbers. However, using the state
				mapping from $z$ to $x$ in \eqref{eq:26-1}, having the derivatives
				of $z$ states, and applying the controller in \eqref{eq:25-1}, one
				finds the following linear system:
				\begin{alignat}{1}
					\dot{z}= & A_{z}z+B_{z}v.\label{eq:28}\\
					y_{z}= & C_{z}z.\label{eq:29}
				\end{alignat}
				where
				\begin{align*}
					A_{z} & =\left[\begin{array}{cccc}
						A_{z1} & 0_{4\times4} & 0_{4\times4} & 0_{4\times2}\\
						0_{4\times4} & A_{z1} & 0_{4\times4} & 0_{4\times2}\\
						0_{4\times4} & 0_{4\times4} & A_{z1} & 0_{4\times2}\\
						0_{2\times4} & 0_{2\times4} & 0_{2\times4} & A_{z2}
					\end{array}\right],B_{z}=\left[\begin{array}{c}
						B_{1z}\\
						B_{2z}\\
						B_{3z}\\
						B_{4z}
					\end{array}\right]\\
					C_{z}&=diag(C_{z1}, C_{z1}, C_{z1}, C_{z2})
				\end{align*}
				with
				\begin{alignat*}{1}
					A_{1z} & =\left[\begin{array}{cccc}
						0_{3\times 1} & \mathbf{I}_3\\
						0 & 0_{1\times 3}
					\end{array}\right],\ A_{2z}=\left[\begin{array}{cc}
						0 & 1\\
						0 & 0
					\end{array}\right],\\
					B_{1z} & =\left[\begin{array}{cccc}
						0_{3\times 1} & 0_{3\times 3}\\
						1 & 0_{1\times 3}
					\end{array}\right],\ B_{2z}=\left[\begin{array}{cccc}
						0_{3\times 1} & 0_{3\times 1} & 0_{3\times 2}\\
						0 & 1 & 0_{1\times 2}
					\end{array}\right]\\
					B_{3z} & =\left[\begin{array}{cccc}
						0_{3\times 2} & 0_{3\times 1} & 0_{3\times 1}\\
						0_{1\times 2} & 1 & 0
					\end{array}\right],\ B_{4z}=\left[\begin{array}{cccc}
						0_{1\times 3} & 0\\
						0_{1\times 3} & 1
					\end{array}\right],\\
					C_{1z} & =\left[\begin{array}{cccc}
						1 & 0 & 0 & 0\end{array}\right],\ C_{2z}=\left[\begin{array}{cc}
						1 & 0\end{array}\right]
				\end{alignat*}
			\end{proof}
			To implement MPC, the linear equivalent model in \eqref{eq:28} and \eqref{eq:29},
			is discretized as follows:
			\begin{equation}
				z_{d}(k+1|k)=A_{d}z_{d}(k)+B_{d}v_{d}(k).\label{eq:80}
			\end{equation}
			where $\delta$ is the sampling period, $A_{d}=(I+\delta A_{z})$, $B_{d}=\delta B_{z}$. The associated QCQP problem can be formulated as:
			\begin{align}
				J & =\text{min}_{v_{d}(k)}\sum_{i=0}^{N-1}\left[z_{d}(k+i|k)^{\top}Qz_{d}(k+i|k)\right.\nonumber \\
				& \left.+v_{d}(k+i|k)^{\top}Rv_{d}(k+i|k)\right]+z_{d}(N|k)^{\top}\bar{Q}z_{d}(N|k)
				\label{eq:cost}
			\end{align}
			subject to
			\begin{equation}
				z_{d}(k+1)=A_{d}z_{d}(k)+B_{d}v_{d}(k),\forall k=0,\ldots,N-1.\label{eq:42-2}
			\end{equation}
			\begin{gather}
				\left[\begin{array}{c}
					\underline{z}_{1}\\
					\underline{z}_{5}\\
					\underline{z}_{9}\\
					\underline{z}_{13}
				\end{array}\right]\leq\left[\begin{array}{c}
					z_{1}\\
					z_{5}\\
					z_{9}\\
					z_{13}
				\end{array}\right]\leq\left[\begin{array}{c}
					\overline{z}_{1}\\
					\overline{z}_{5}\\
					\overline{z}_{9}\\
					\overline{z}_{13}
				\end{array}\right],\forall k=0,\ldots,N-1.\label{eq:43-2}\\
				v_{\min}\leq v_{d}\leq v_{\max},\forall k=0,\ldots,N-1.\label{eq:44-2}\\
				\triangle\mathcal{H}(z_{d}(k+1))\geq-\gamma\mathcal{H}(z_{d}(k)),k=0,\ldots,N-1.\label{eq:46}\\
				v_{\min}\leq K(A_{d}+B_{d}K)^{i}z_{d}(k+N|k)\leq v_{\max}, \notag\\
				\forall k=0,\ldots,N_{c}.\label{eq:45-2}\\
				\left[\begin{array}{c}
					\underline{z}_{1}\\
					\underline{z}_{5}\\
					\underline{z}_{9}\\
					\underline{z}_{13}
				\end{array}\right]\leq(A_{d}+B_{d}K)^{i}z_{d}(k+N|k)\leq\left[\begin{array}{c}
					\overline{z}_{1}\\
					\overline{z}_{5}\\
					\overline{z}_{9}\\
					\overline{z}_{13} \notag
				\end{array}\right],\\ \forall k=0,\ldots,N_{c}.\label{eq:47}
			\end{gather}
			
			where $N$ is the prediction horizon, $N_{c}$ is the constraint checking horizon, 
			$Q$ and $R$ are the states
			and inputs weight matrices, respectively, and $\bar{Q}$ is the terminal
			weight matrix. The expression in \eqref{eq:42-2} is the dynamical constraint
			of the linear equivalent model. Note that $\underline{z}_{\times}$
			and $\overline{z}_{\times}$ refer to the minimum and maximum value
			associated with the subscript $\times$, respectively. The expression
			in \eqref{eq:44-2} represents control input constraints of $z$ dynamics.
			In the view of the input constraint mapping in \eqref{eq:25-1}, it
			is clear that there is a nonlinear relation between the control inputs
			of $\bar{x}$ and $z$ dynamics. However, one can easily use $\bar{A}^{-1}(\bar{x})$,
			$\mathcal{L}_{\bar{f}}^{4}h_{1}(\bar{x})$, $\mathcal{L}_{\bar{f}}^{4}h_{2}(\bar{x})$,
			$\mathcal{L}_{\bar{f}}^{4}h_{3}(\bar{x})$, and $\mathcal{L}_{\bar{f}}^{2}h_{4}(\bar{x})$
			to verify the boundedness of $v$ will guarantee boundedness of $u$.
			Let us propose the following control barrier function:
			\begin{equation}
				\mathcal{H}_{k}=(z_1-x_{\text{obs}})^{2}+(z_5-y_{\text{obs}})^{2}+(z_9-z_{\text{obs}})^{2}-r_{\text{obs}}^{2}\label{eq:100}
			\end{equation}
			where $x_{\text{obs}}$, $y_{\text{obs}}$, and $z_{\text{obs}}$ describe $x$, $y$ and
			$z$ coordinates of the obstacle, respectively and $0<\gamma\leq1$.
			The expression in \eqref{eq:46} defines the proposed
			safety constraint based on $\mathcal{H}_{k},$ where it forces the
			forward invariance of a safe set $C_{k}$ \cite{ames2019control}.
			Fig. \ref{fig:scheme} shows the proposed MPC-CBF-DFL scheme, where
			the mapping function $\Phi(\overline{x})$ represents the state mapping
			of \eqref{eq:26-1}. $\Phi(\overline{x})$ converts the feedback signals
			of the states in $\overline{x}$ dynamics to $z$ where the optimization
			problem of MPC-CBF is solved providing the values of $v$ to be given
			to the DFL controller.
			
			\medskip{}
			\begin{figure}[h]
				\begin{centering}
					\includegraphics[scale=0.5]{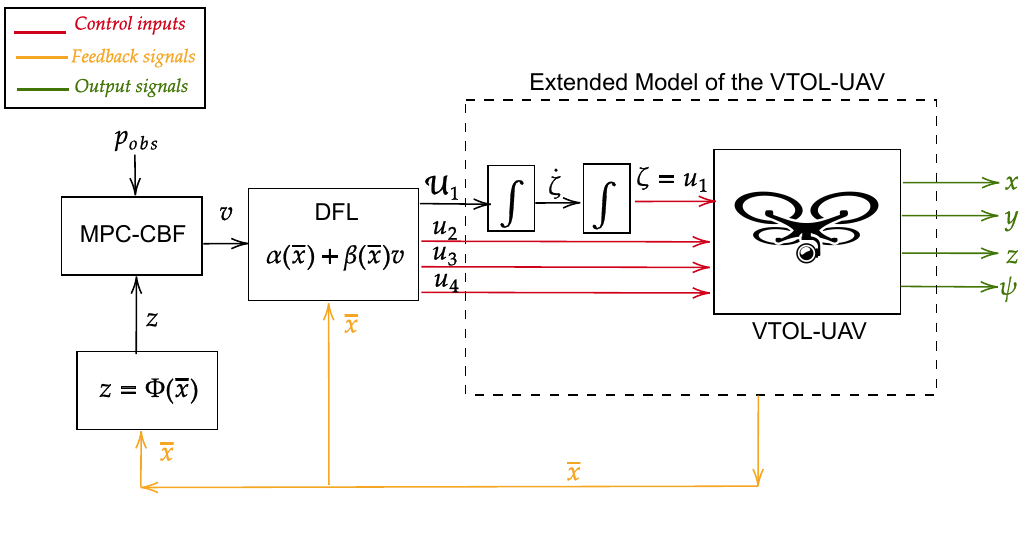}
					\par\end{centering}
				\caption{\label{fig:scheme} MPC-CBF-DFL scheme for the VTOL-UAV.}
				\label{fig:propsedscheme}
			\end{figure}
			
			From \eqref{eq:46}, the level set of CBF constraints is defined by 
		\begin{assum}\label{assum:The-optimal-pblueicted} The optimization problem \eqref{eq:cost}-\eqref{eq:47} is feasible for the initial time k = 0.\end{assum}
		
		\begin{assum}\label{assum:The-pitch-2} The VTOL-UAV pitch angle $\theta$ and roll angle $\phi$
			are constraint by $\frac{-\pi}{2}<\theta<\frac{\pi}{2}$ and $\frac{-\pi}{2}<\phi<\frac{\pi}{2}$ such that
			the inverse of the decoupling matrix in \eqref{eq:25-1} is nonsingular.\end{assum}
		
		\begin{thm}
			\label{thm:thm2} (Asymptotic convergence of MPC-CBF-DFL of VTOL-UAV
			model) The Dynamics of the VTOL-UAV model in \eqref{eq:201} are asymptotically
			stable and the obstacles are avoided using MPC-CBF-DFL scheme if Assumptions
			\ref{assum:The-optimal-pblueicted} and \ref{assum:The-pitch-2} hold true
			and the terminal weight in \eqref{55-1} is selected to be equivalent
			to the infinite horizon cost where
			\begin{equation}
				\sum_{i=0}^{\infty}(\Vert z(i)\Vert_{Q}^{2}+\Vert v(i)\Vert_{R}^{2})=z^{\top}(0)\bar{Q}z(0)\label{55-1}
			\end{equation}
			and
			\begin{equation}
				\bar{Q}-(A_{d}+B_{d}K)^{\top}\bar{Q}(A_{d}+B_{d}K)=Q+K^{\top}RK.\label{eq:56-1}
			\end{equation}
			such that the optimization problem in \eqref{eq:cost}-\eqref{eq:47} is recursively feasible given the K is stabilizing $(A_{d}+B_{d}K)^{i}$ $\forall i=1,\dots N_{c}$  and $\mathcal{H}((A_{d}+B_{d}K)z_{d})>\mathcal(1-\gamma)\mathcal{H}(z_{d})$ $\forall z_d\in\mathcal{Z}_f$ with sufficiently large $N_{c}$ and $0<\gamma\leq1$.
		\end{thm}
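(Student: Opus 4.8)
The plan is to follow the standard terminal-ingredients argument for MPC, adapted to the DFL-linearized dynamics \eqref{eq:28} and the discrete CBF constraint \eqref{eq:46}. The proof decomposes into three pieces: recursive feasibility, asymptotic stability using the optimal cost as a Lyapunov function, and transfer of both properties back to the original nonlinear model \eqref{eq:201} through the diffeomorphism induced by the DFL state map \eqref{eq:26-1}. Throughout, Assumption \ref{assum:The-optimal-pblueicted} supplies feasibility at $k=0$ and Assumption \ref{assum:The-pitch-2} keeps the decoupling matrix in \eqref{eq:25-1} nonsingular so that the cascaded controller is well-defined.

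First I would prove recursive feasibility. Suppose the QCQP \eqref{eq:cost}--\eqref{eq:47} is feasible at step $k$ with optimal input sequence $\{v^{*}(k),\dots,v^{*}(k+N-1)\}$ and predicted states $\{z^{*}(k),\dots,z^{*}(k+N)\}$. At step $k+1$ I would construct a candidate sequence by shifting: reuse $\{v^{*}(k+1),\dots,v^{*}(k+N-1)\}$ and append the terminal feedback $v_{\text{cand}}(k+N)=K z^{*}(k+N)$. The state and input bounds \eqref{eq:43-2}--\eqref{eq:44-2} for the first $N-1$ shifted steps are inherited from the previous solution; the appended step lands in the terminal set $\mathcal{Z}_f$, where constraints \eqref{eq:45-2} and \eqref{eq:47} checked over the horizon $N_c$ encode positive invariance of $\mathcal{Z}_f$ under $(A_{d}+B_{d}K)$. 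The nontrivial part is the CBF constraint \eqref{eq:46} at the appended step: since \eqref{eq:46} is equivalent to $\mathcal{H}(z_{d}(k+1))\geq(1-\gamma)\mathcal{H}(z_{d}(k))$, and under the terminal controller $z_{d}(k+1)=(A_{d}+B_{d}K)z_{d}$, the hypothesis $\mathcal{H}((A_{d}+B_{d}K)z_{d})>(1-\gamma)\mathcal{H}(z_{d})$ for all $z_{d}\in\mathcal{Z}_f$ is exactly the discrete forward-invariance condition that makes the candidate admissible. Feasibility at $k+1$ then follows by induction from Assumption \ref{assum:The-optimal-pblueicted}.

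Next I would establish asymptotic stability by taking the optimal cost $J^{*}(z(k))$ as a Lyapunov function. The shifted candidate upper-bounds $J^{*}(z(k+1))$, and telescoping the stage costs leaves $J^{*}(z(k+1))-J^{*}(z(k))\leq -\Vert z(k)\Vert_{Q}^{2}-\Vert v^{*}(k)\Vert_{R}^{2}$, provided the appended terminal terms cancel. That cancellation is precisely the Lyapunov equation \eqref{eq:56-1}: the identity $\bar{Q}-(A_{d}+B_{d}K)^{\top}\bar{Q}(A_{d}+B_{d}K)=Q+K^{\top}RK$ forces the terminal-cost decrement to match the added stage cost $\Vert z(k+N)\Vert_{Q}^{2}+\Vert Kz(k+N)\Vert_{R}^{2}$ incurred by the terminal controller, while \eqref{55-1} guarantees $\bar{Q}$ coincides with the infinite-horizon optimal cost on $\mathcal{Z}_f$. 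With $Q,R\succ 0$ the value function is positive definite and radially unbounded, so the strict decrease yields $z(k)\to 0$ asymptotically.

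Finally, I would transfer these conclusions to the VTOL-UAV. By Assumption \ref{assum:The-pitch-2} the map \eqref{eq:26-1} relating $z$ and $\bar{x}$ is a diffeomorphism on the admissible region $-\tfrac{\pi}{2}<\theta,\phi<\tfrac{\pi}{2}$, so asymptotic stability of $z=0$ implies asymptotic stability of the corresponding equilibrium of \eqref{eq:201}; boundedness of $v$ propagates to boundedness of the physical inputs $u$ via $\bar{A}^{-1}(\bar{x})$ and the Lie-derivative terms in \eqref{eq:25-1}. Safety is immediate: the CBF constraint \eqref{eq:46} enforced along the horizon renders the safe set $C_k$ forward invariant, so the UAV never enters the obstacle ball of radius $r_{\text{obs}}$ in \eqref{eq:100}. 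I expect the main obstacle to be the construction and characterization of $\mathcal{Z}_f$, which must simultaneously be positively invariant under $(A_{d}+B_{d}K)$, respect the box and input limits \eqref{eq:45-2}--\eqref{eq:47} across the entire checking horizon, and satisfy the CBF inequality for the nonconvex quadratic $\mathcal{H}_{k}$; reconciling these is exactly why $N_{c}$ must be taken sufficiently large.
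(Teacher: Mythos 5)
Your proposal is correct and takes essentially the same route as the paper's proof: recursive feasibility from the shifted tail sequence with the terminal law $v_{d}=Kz_{d}$, control invariance and safe invariance of $\mathcal{Z}_{f}$ enforced through \eqref{eq:45-2}--\eqref{eq:47} and the hypothesis $\mathcal{H}((A_{d}+B_{d}K)z_{d})>(1-\gamma)\mathcal{H}(z_{d})$ on $\mathcal{Z}_{f}$, and stability from the optimal cost together with the Lyapunov equation \eqref{eq:56-1}. If anything, your formulation is more careful than the paper's at two points: you correctly state the value-function decrease as the inequality $J^{*}(z(k+1))-J^{*}(z(k))\leq-\Vert z(k)\Vert_{Q}^{2}-\Vert v^{*}(k)\Vert_{R}^{2}$ obtained from the suboptimal shifted candidate (the paper asserts it as an equality), and you correctly read \eqref{eq:46} as $\mathcal{H}(z_{d}(k+1))\geq(1-\gamma)\mathcal{H}(z_{d}(k))$, i.e., forward invariance of the safe set, rather than the paper's unsupported claim that $\mathcal{H}$ strictly increases along the trajectory.
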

		
		\begin{proof}
			Recall the linear equivalent model of the VTOL-UAV in \eqref{eq:28}
			and \eqref{eq:29}, and the cost function in \eqref{eq:cost}. The
			use of the terminal cost function $\bar{Q}$ to solve the Lyapunov
			function in \eqref{eq:56-1} will render the optimal cost function
			for the next time step $k+1$ and thereby $J^{*}(k+1)$ is equal to $J^{*}(k+1)=J^{*}(k)-(\Vert z_{d}(k)\Vert_{Q}^{2}+\Vert v(k)\Vert_{R}^{2})$. 
			Hence, it can be concluded that $J^{*}(k+1)\rightarrow0$ as $k\rightarrow0$. From \eqref{eq:26-1},
			one finds as $z\rightarrow0$ and $\bar{x}\rightarrow0$ results in asymptotic convergence. We prove the recursive stability using the classical terminal constraints \cite{scokaert1996infinite}. 
			Let $\tilde{v}_{d}(k+1)$ denote the input sequence at time $k+1$ corresponding to the optimal prediction
			at time $k$. For feasible $\tilde{v}_{d}(k+1)$ the $N$th element (the tail of $v_{d}^{*}(k+N|k)=Kz_{d}^{*}(k+N|k)$ is required to satisfy the terminal constraint. This is equivalent to the constraints on the terminal state prediction $z_{d}(k+N|k)\in\mathcal{Z}_{f}$, where $\mathcal{Z}_{f}$ is the terminal set. The necessary and sufficient conditions for the predictions generated by the tail $\tilde{v}_{d}(k+1)$ are feasible at time $k+1$ is to control $\mathcal{Z}_{f}$ and achieve safe invariance. The terminal set $\mathcal{Z}_{f}$ is control invariant  if
			$(A_{d}+B_{d}K)z_{d}(k+N|k)\in\text{\ensuremath{\mathcal{Z}_{f}}\hspace{0.1cm}\ensuremath{\forall z_{d}(k+N|k)\in\mathcal{Z}_{f}}}.  
			$ The terminal set $\mathcal{Z}_{f}$ is safe invariant if 
			$
			\triangle\mathcal{H}(z_{d}(k+1|N))\geq-\gamma\mathcal{H}(z_{d}(k|N))\text{\ensuremath{}\hspace{0.1cm}\ensuremath{\forall z_{d}(k+N|k)\in\mathcal{Z}_{f}.}}
			$
			To render $\mathcal{Z}_{f}$ control invariant, we
			need to verify that
			\begin{flalign}
				v_{\text{max}} & \leq K(A_{d}+B_{d}K)^{i}z_{d}(k+N|k)\leq v_{\text{max}}\label{eq:62}\\
				\left[\begin{array}{c}
					\underline{z}_{1}\\
					\underline{z}_{5}\\
					\underline{z}_{9}\\
					\underline{z}_{13}
				\end{array}\right]\leq & (A_{d}+B_{d}K)^{i}z_{d}(k+N|k)\leq\left[\begin{array}{c}
					\overline{z}_{1}\\
					\overline{z}_{5}\\
					\overline{z}_{9}\\
					\overline{z}_{13}
				\end{array}\right]
				& \forall i\geq0\label{eq:64}
			\end{flalign}
			one can design $K$ to render $(A_{d}+B_{d}K)^{i}$ stable such that the
			norm of $|\lambda(A_{d}+B_{d}k)|<1.$ To render $\mathcal{Z}_{f}$ safe invariant, the constraint \eqref{eq:46} will impose that $\mathcal{H}(z_{d}(k+1|N))-\mathcal{H}(z_{d}(k||N))\geq-\gamma\mathcal{H}(z_{d}(k|N))$
			and as a result 
			with $0<\gamma\leq1$. We have $\mathcal{H}(z_{d}(k+1|N))>\mathcal{H}(z_{d}(k|N)$. From \eqref{eq:46}, if $z_{d}(N|0)\in\mathcal{Z}_{f}$
			such that $\mathcal{H}(z_{d}(N|0)\geq0$. One can design $K$ such that   $\mathcal{H}((A_{d}+B_{d}K)z_{d})>\mathcal(1-\gamma)\mathcal{H}(z_{d})$ with $0<\gamma\leq1$ proving the safety invariance of $\mathcal{Z}_{f}$.
		\end{proof}
		Assumption  \ref{assum:The-optimal-pblueicted}
		provides the initial feasibility  conditions at $k=0$ where  Theorem \ref{thm:thm2}  guarantees the feasibility for the next time steps.
		We can define the terminal constraint set as
		$
		{Z}_{f}(N_{c})=\{z_{d}:v_{\text{min}}\leq K(A_{d}+B_{d}K)^{i}z_{d}\leq v_{\text{max}},i=0,1,\ldots N_{c}\}\
		$.
		By choosing sufficiently large $N_{c}$, the allowable operating region of the MPC law will be increased satisfying all the constraints.
		Note that the constraints in \eqref{eq:45-2} and \eqref{eq:47} are extra computational burden specially with large $N_{c}$ which is needed to maintain the feasibility (visit Theorem \ref{thm:thm2}). However, the constraints in \eqref{eq:45-2} and \eqref{eq:47} are linear constraints, where the QCQP presented in \eqref{eq:cost}-\eqref{eq:47} can be solved efficiently by off-the-shelf solvers.
		In the next section, the proposed scheme will be tested with respect
		to the baseline of having constraint of the Euclidean distance (denoted
		as MPC-EC-DFL). Since the chosen CBF function in \eqref{eq:100}
		is equal to Euclidean distance, the new safety constraint in \eqref{eq:46}
		will be $\mathcal{H}(x_{k})>0$.
		\section{Numerical Results\label{sec:Numerical-Results}}
				\begin{figure*}[!htb]
			\begin{centering}
				\centering
				\includegraphics[scale=0.14]{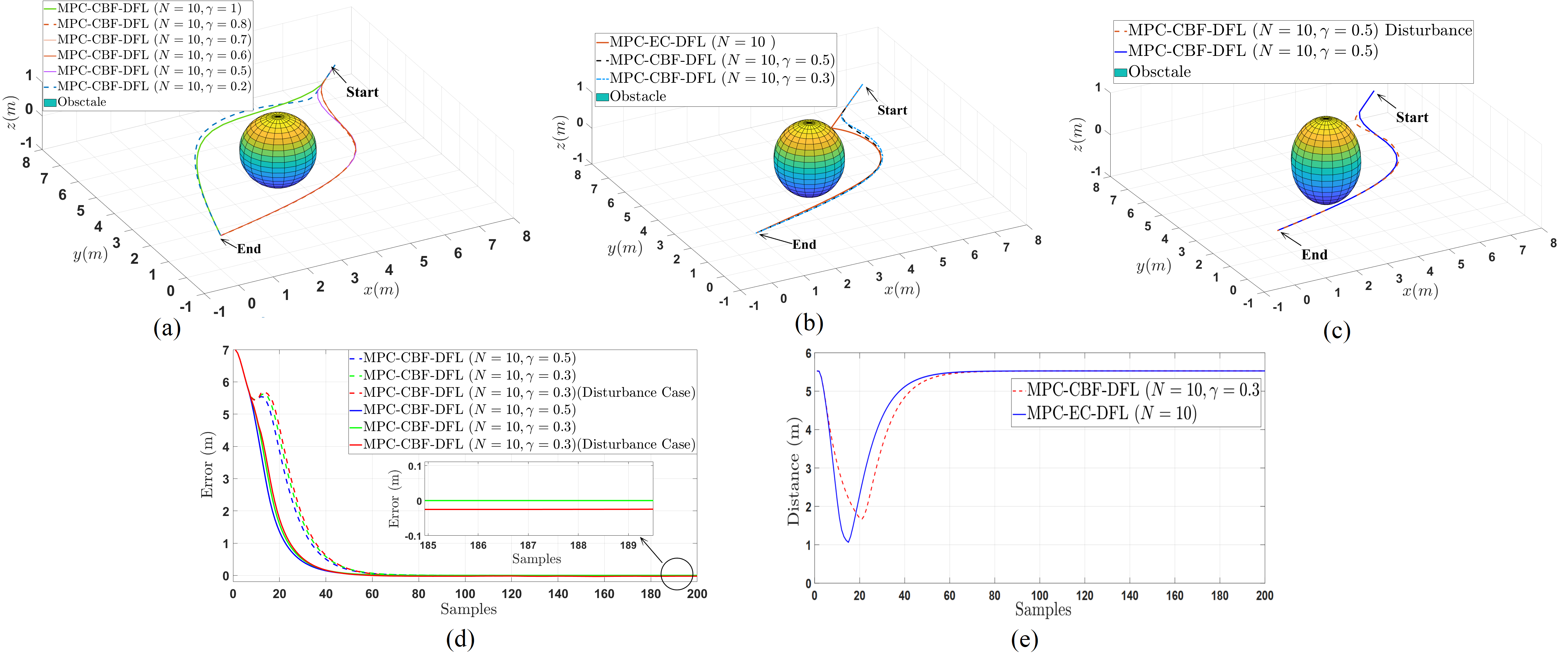}
				
				\par\end{centering}
			\caption{\label{fig:result1-1} Output performance of MPC-CBF-DFL scheme: (a) illustrates the safe navigation of the VTOL-UAV against a spherical Obstacle  with different pairs of $N$ and $\gamma$; (b) presents a comparison between MPC-CBF-DF vs. MPC-ED-DF schemes; (c) shows the trajectories of the proposed scheme against a noisy feedback signal corrupted by a Gaussian noise; (d) shows the error signals where dashed-line and solid-line refer to the $x$ and $y$ errors, respectively; (e) depicts the distance between the VTOL-UAV and the obstacle (MPC-CBF-DF vs. MPC-ED-DF schemes)}.
			
		\end{figure*}
		
		The QCQP problem in \eqref{eq:cost}-\eqref{eq:47}
		was solved using Interior Point OPTimizer (IPOPT) in MATLAB.  The proposed scheme was tested with a sampling time of $\delta=0.05$
		sec where the continuous
		dynamics in \eqref{eq:201} and the control input in \eqref{eq:25-1}
		were numerically integrated using 4th Order Runge Kutta Method and $\delta=0.05$ sec. The VTOL-UAV parameters are defined as follows: $m=0.7\,\text{kg}$, $d=0.3\,\text{m}$, $I_{x}=I_{y}=I_{z}=1.241\,\text{kg}/\text{m}^{2}$.
		The VTOL-UAV started from the initial position $(x=7$, $y=7$,
		and $z=0$) meters and $\psi=0$ rad. The UAV moved toward the origin
		against a sphere obstacle placed at the middle of the map.\\
		
		Fig. \ref{fig:result1-1}.(a) shows the effect of $\gamma$ in
		the evolution of the VTOL-UAV trajectory. 
		By setting a small value of $\gamma$, $\mathcal{H}(x_{k+1})$
		will be significantly greater than $\mathcal{H}(x_{k})$, which simply
		means a bigger safety metric for the next time step $k+1$. Nonetheless,
		Fig. \ref{fig:result1-1}.(a) illustrates the merit of using CBF for obstacle
		avoidance, where stronger safety measures with a lower value of $\gamma$
		would lead to lower predictionhorizons demonstrating the effectiveness
		of employing CBF.
	The performance of the proposed scheme MPC-CBF-DFL against the baseline of Euclidean distance constraint (MPC-ED-DFL) is presented in
	Fig. \ref{fig:result1-1}.(b). MPC-ED-DFL scheme (demonstrated in blue line)
	in Fig. \ref{fig:result1-1}.(b) deviates from the obstacle once it gets
	closer to it. In contrast, using the same prediction horizon the proposed
	MPC-CBF-DFL (plotted in blue and green lines) deviates from the obstacle
	ahead, since the
	proposed scheme relies on the availability of the feedback signals.
Fig. \ref{fig:result1-1}.(c) illustrates the robustness of the proposed scheme
against noisy feedback signals corrupted by a Gaussian noise with
zero mean and variance of 0.05. Fig. \ref{fig:result1-1}.(d) shows the asymptotic convergence of the error of $x$ and $y$ to the origin, while in the case of disturbance the error convergence to the neighborhood of the origin.  Fig. \ref{fig:result1-1}.(e) depicts the distance between the VTOL-UAV and the obstacle using the proposed MPC-CBF-DF scheme and compared to the MPC-ED-DF scheme.

%
\section{Conclusion\label{sec:Sec7_Conclusion}}

In this paper, a cascaded scheme of Dynamic Feedback
Linearization (DFL) and Model prediction Control
(MPC) was proposed to achieve safe navigation of
Vertical Take-off and Landing (VTOL) Unmanned Aerial Vehicles (UAVs).
The MPC was formulated on the linear equivalent model rendered by
the DFL. The proposed scheme showed a strong theoretical guarantee
for closed-loop stability and recursive feasibility based on Linear MPC stability analysis.
Numerical simulations illustrated successful obstacle-free navigation paths
where fast and easy implementation can be achieved with a theoretical
guarantee of convergence. The utilization of the Control Barrier
Functions (CBF) within the safety constraints improved
the obstacle avoidance maneuvers with respect to the constraints based
on Euclidean distances. Numerical results illustrated robustness of the proposed MPC-ED-DFL scheme.

\balance
\bibliographystyle{IEEEtran}
\bibliography{Bib_MPC}
		
	\end{document}